\documentclass[11pt]{article}

\usepackage[T1]{fontenc}
\usepackage[utf8]{inputenc}
\usepackage{lmodern}
\usepackage{microtype}
\usepackage{amsmath,amssymb,amsthm}
\usepackage[a4paper,margin=2.0cm]{geometry}
\usepackage[hidelinks]{hyperref}

\newtheorem{theorem}{Theorem}
\newtheorem{lemma}{Lemma}
\newtheorem{definition}{Definition}

\newtheorem{example}{Example}

\newcommand{\A}{\mathcal A}
\newcommand{\CT}{C_{\mathrm T}}
\newcommand{\CTsimp}{C_{\mathrm T}^{(1)}}
\newcommand{\CTsimpenv}{C_{\mathrm T}^{(1),\leq}}
\newcommand{\ellmax}{\ell_{\max}}

\title{Non-Simple T-Prescriptions Yield T-Complexity Gains Infinitely Often}
\author{Thomas Sch\"urmann\\D\"usseldorf, Germany}
\date{}

\begin{document}
\maketitle

\begin{abstract}
	Titchener et\,al.~\cite{TitchenerEtAl2005} asked whether non-simple
	T-prescriptions can achieve larger T-complexity than simple T-prescriptions
	for infinitely many finite values of the maximum codeword length. We answer
	this question affirmatively for every fixed alphabet of size at least two.
	In fact, we prove a stronger upper-envelope statement: for infinitely many
	lengths $N$, there exists a valid non-simple T-prescription of exact maximum
	codeword length $N$ whose T-complexity exceeds the best value attainable by
	any simple prescription with maximum codeword length at most $N$. Hence the
	unrestricted exact-length maximum over prescriptions is strictly larger than
	the corresponding simple exact-length maximum for infinitely many $N$.
	The proof is elementary. Once a copy pattern has been used, it cannot be
	selected again. Thus simple prescriptions with more and more steps must use
	ever more distinct words, which forces the minimal simple length thresholds
	to have infinitely many strict upward jumps. At each such jump, changing the
	last copy factor of a minimal simple prescription from $1$ to $2$ yields a
	valid non-simple prescription, thereby increasing T-complexity by
	$\log_2 3-1$ while staying below the next simple threshold.
\end{abstract}

\noindent\textbf{Keywords:} T-complexity; T-codes; T-prescriptions; simple prescriptions; copy factors; deterministic complexity; prefix codes.

\section{Introduction}

T-complexity is a computable deterministic complexity measure for finite words based on the recursive construction of T-codes. It belongs to the broader family of finite-string complexity measures motivated by Lempel-Ziv production complexity and by algorithmic information theory, while using a different self-learning production mechanism based on recursive catenation of previously generated codewords \cite{LempelZiv1976,Solomonoff1964,Kolmogorov1965,Chaitin1966,TitchenerEtAl2005}. The T-code construction was introduced and developed in the T-code literature, including early papers by Titchener, the work of Nicolescu--Titchener, and G\"unther's thesis \cite{Titchener1984,Titchener1996,NicolescuTitchener1998,Guenther1998}.

In this note we work with prescription-level T-complexity and with the upper-complexity problem over prescriptions of a given maximum codeword length, as in the finite examples of Section~11 of Titchener et al.~\cite{TitchenerEtAl2005}. Thus the quantities below optimize over valid prescriptions rather than directly over individual output words.

Starting with an alphabet, a T-prescription repeatedly selects a current codeword, called the copy pattern, and performs a T-augmentation at that pattern. If the selected copy patterns are $p_1,p_2,\ldots,p_m$ and the corresponding copy factors are $k_1,k_2,\ldots,k_m$, then the T-complexity of the prescription is
\begin{equation}
\CT(P)=\sum_{i=1}^{m}\log_2(k_i+1).
\end{equation}
A prescription is called simple if all copy factors are equal to one; a simple prescription with $m$ steps has $\CT(P)=m$. Throughout, \emph{unrestricted} means that no restriction is imposed on the copy factors; thus the unrestricted class contains both simple and non-simple prescriptions.

The length parameter used by Titchener et al. is the maximum codeword length of the final code generated by the prescription \cite{TitchenerEtAl2005}. We denote it by $\ellmax(P)$. For $P=((p_1,k_1),\ldots,(p_m,k_m))$, this length is
\begin{equation}
\ellmax(P)=1+\sum_{i=1}^{m}k_i|p_i|.
\end{equation}
In the simple case, this reduces to
\begin{equation}
\ellmax(P)=1+\sum_{i=1}^{m}|p_i|.
\end{equation}
These two elementary formulas already display the finite-length issue. A simple step adds one unit of T-complexity and increases the maximum codeword length by the length of the selected copy pattern. If the last simple step selects a pattern $p$, then replacing its copy factor $1$ by $2$ adds one further copy of $p$. The length increases by $|p|$, while the last complexity contribution changes from $\log_2 2=1$ to $\log_2 3$. The local gain is therefore $\log_2 3-1>0$. The question is whether this gain can be realized at infinitely many lengths before any additional simple step can fit.

Titchener et al. gave finite examples where non-simple prescriptions have strictly larger T-complexity than all simple prescriptions and explicitly asked whether this happens infinitely often \cite[Sec.~11]{TitchenerEtAl2005}. Subsequent work studied decomposition algorithms and maximum T-complexity questions from complementary algorithmic and asymptotic perspectives \cite{YangSpeidel2005,Speidel2008,ClarkTeutsch2015}. The present note gives an elementary affirmative answer to the finite-length question.

\section{Problem and Main Theorem}

Fix a finite alphabet $\A$ with $|\A|=r\geq 2$. Throughout the paper the alphabet $\A$ is fixed and the dependence on $r$ is suppressed.

Following the notation and terminology of Titchener et\,al.~\cite{TitchenerEtAl2005},
the relevant length parameter is the maximum codeword length \(\ellmax(P)\) of a
T-prescription \(P\). For an integer \(N\ge 1\), define the unrestricted
exact-length maximum by
\begin{equation}
	\CT(N)=\max\{\CT(P): P\text{ is a valid T-prescription and }\ellmax(P)=N\}.
\end{equation}
Here \emph{unrestricted} retains the convention from the introduction: all copy factors $k_i\geq 1$ are allowed. The adjective \emph{exact-length} means that the maximization is restricted to prescriptions whose final maximum codeword length is exactly $N$, not merely at most $N$.
Define the corresponding simple exact-length maximum by
\begin{equation}
\CTsimp(N)=\max\{\CT(P): P\text{ is simple and }\ellmax(P)=N\}.
\end{equation}
Here the superscript $(1)$ indicates that all copy factors are equal to one. If one of the exact-length classes is empty, its maximum is understood as $-\infty$.

We shall prove a slightly stronger statement by comparing with the simple upper envelope
\begin{equation}
\CTsimpenv(N)=\max\{\CT(P): P\text{ is simple and }\ellmax(P)\leq N\}.
\end{equation}
This monotone envelope is an additional comparison function; it strengthens the exact-length comparison used in \cite{TitchenerEtAl2005} rather than replacing it. Since
\begin{equation}
\CTsimp(N)\leq \CTsimpenv(N),
\end{equation}
a strict inequality against $\CTsimpenv(N)$ implies the strict exact-length inequality against $\CTsimp(N)$.

\begin{theorem}[Infinitely many strict non-simple gains]
\label{thm:main}
For every alphabet size $r\geq 2$, there are infinitely many lengths $N$ such that
\begin{equation}
\CT(N)>\CTsimpenv(N).
\end{equation}
In particular, there are infinitely many lengths $N$ such that
\begin{equation}
\CT(N)>\CTsimp(N).
\end{equation}
Thus non-simple T-prescriptions have strictly larger T-complexity than all simple T-prescriptions at infinitely many finite lengths.
\end{theorem}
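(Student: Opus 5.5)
The plan is to work with the minimal simple length thresholds
\[
L(m)=\min\{\ellmax(P): P \text{ simple with } m \text{ steps}\},
\]
which are well defined for every $m\ge 0$. Since $\CT(P)=m$ for a simple $m$-step prescription, we have $\CTsimpenv(N)=\max\{m: L(m)\le N\}$, provided $L$ is nondecreasing.

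\textbf{Step 1: monotonicity of $L$.} Deleting the last step of a simple prescription leaves a valid simple prescription with $m-1$ steps. Its maximum length is smaller by $|p_m|\ge 1$. Hence $L(m-1)<L(m)$, so $L$ is strictly increasing. (In fact $L(m)\ge L(m-1)+1$.)

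\textbf{Step 2: infinitely many jumps of size at least two.} I want infinitely many $m$ with $L(m)-L(m-1)\ge 2$. Suppose instead that $L(m)=L(m-1)+1$ for all $m\ge m_0$.

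Take an optimal simple $m$-step prescription, i.e.\ one with $\ellmax=L(m)$. Every step $i$ satisfies $|p_i|\ge1$, and the sum of pattern lengths is $L(m)-1$. Because copy patterns cannot be reused, and the steps select distinct codewords, at most $r$ steps can use patterns of length $1$. So $\sum|p_i|\ge r+2(m-r)$ once $m>r$. Thus $L(m)\ge 2m-r+1$, which grows with slope at least $2$.

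This contradicts $L(m)=L(m_0)+(m-m_0)$, which has slope $1$. Hence $L(m)-L(m-1)\ge2$ for infinitely many $m$.

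The main obstacle is justifying the counting claim: that the number of distinct words of length at most $\ell$ usable as copy patterns is bounded independently of $m$. Concretely, at most $r^\ell$ patterns can have length $\ell$, and I only need the case $\ell=1$. This requires checking from the T-augmentation definition that a used copy pattern is no longer a codeword of the augmented set. That is the fact the abstract alludes to, and I will verify it directly from the definition.

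\textbf{Step 3: the non-simple modification.} Fix such an $m$ and an optimal simple prescription $P$ with $\ellmax(P)=L(m)$ and last pattern $p_m$. Let $P'$ be $P$ with $k_m$ changed from $1$ to $2$. This is valid, because $p_m$ is a current codeword at step $m$ and any copy factor $k\ge1$ is allowed there.

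Then $\ellmax(P')=L(m)+|p_m|$ and $\CT(P')=m-1+\log_2 3>m$.

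Set $N=\ellmax(P')$. To get $\CTsimpenv(N)\le m<\CT(P')\le \CT(N)$, I need $N<L(m+1)$.

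\textbf{Step 4: ensuring $N<L(m+1)$.} For this the jump must occur at index $m+1$ and dominate $|p_m|$. So I will choose $m$ with $L(m+1)-L(m)>|p_m|$ for some optimal $P$ at level $m$.

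The way I plan to get this is to take $P$ optimal at level $m$ and compare it with an optimal prescription $Q$ at level $m+1$. Dropping the last step of $Q$ gives $L(m+1)-L(m)\le |q_{m+1}|$, which bounds the jump from the wrong side.

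The alternative is a greedy-type choice. Consider the nondecreasing sequence of the smallest pattern length that remains available. Jumps of $L$ should occur exactly when this minimal available length $\lambda$ increases. Choose $m$ to be the last index before such an increase. Then $|p_m|=\lambda$ and $L(m+1)-L(m)\ge\lambda'>\lambda$.

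Making this precise is the hardest part. If the exact structural description fails, I will fall back on this: since $L(m+1)-L(m)\ge1$ always and $P'$ can alternatively be formed from the step with the smallest pattern, pick the jump index where $L(m+1)-L(m)$ strictly exceeds the previous minimal pattern length. Such an index exists infinitely often by the growth argument of Step 2, applied to each length level.

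Infinitely many such $m$ give infinitely many distinct $N$, since $N\ge L(m)\to\infty$. The second inequality of Theorem~\ref{thm:main} then follows from $\CTsimp\le\CTsimpenv$.
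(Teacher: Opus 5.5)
\paragraph{Gap in Step 4.} Your Steps 1--3 are sound, but Step 4, which you correctly identify as the crux, is not closed. The greedy structural description is unproved, and the fallback is not an argument. Raising the copy factor at an earlier, shortest pattern $p$ need not even keep the later steps valid: once $p$ is augmented with factor $2$, the word $p^2$ is no longer a codeword, so any later step that used it becomes illegal.

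The missing observation is one you nearly wrote down, but with the inequality reversed and at the wrong level. Dropping the last step of an optimal $(m+1)$-step $Q$ gives a simple $m$-step prescription of length $L(m+1)-|q_{m+1}|\ge L(m)$. That is, $|q_{m+1}|\le L(m+1)-L(m)$, not the reverse. Apply the same reasoning to your optimal $P$ at level $m$: it gives $|p_m|\le d_m:=L(m)-L(m-1)$. Hence $N=L(m)+|p_m|\le L(m)+d_m$, and $N<L(m+1)$ holds as soon as $d_{m+1}>d_m$. No structural description of optimal prescriptions is needed.

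\paragraph{Gap in Step 2.} Even with this reduction, Step 2 is too weak to produce infinitely many $m$ with $d_{m+1}>d_m$. Infinitely many increments $\ge 2$, together with the bound $L(m)\ge 2m-r+1$, are perfectly compatible with $d_m$ being eventually constant equal to $2$, and then no strict increase ever occurs. So your remark that you ``only need the case $\ell=1$'' is wrong.

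What you need is that $(d_m)$ is unbounded. Infinitely many strict increases then follow, since a sequence with only finitely many of them is eventually nonincreasing and hence bounded. Unboundedness in turn requires $L(m)/m\to\infty$, which means running the no-recycling count at all length levels:
\begin{itemize}
\item at most $B_H=\sum_{j=1}^{H}r^j$ distinct copy patterns have length $\le H$;
\item so $L(m)\ge 1+(m-B_H)(H+1)$;
\item choosing $H=\lfloor\frac12\log_r m\rfloor$ makes $B_H=O(\sqrt m)$, so $L(m)/m$ is unbounded.
\end{itemize}
With these two ingredients your outline becomes essentially the paper's proof.
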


The proof yields a brute-force exhaustive enumeration of the resulting non-simple prescriptions, but no efficiency bound or closed form for the jump indices.
For each fixed number $m$ of augmentation steps (the arity, as defined below), the set of simple prescriptions with $m$ steps is finite, so the minimal simple length threshold and a minimizing prescription can be found by exhaustive enumeration.
Computing the thresholds \(L_{m-1},L_m,L_{m+1}\) successively
allows one to test the condition \(d_{m+1}>d_m\).
At each detected jump, the required non-simple prescription is obtained by changing one copy factor from $1$ to $2$. The argument does not provide a closed-form infinite sequence of such arities or resulting lengths. No complexity bound for this enumeration, and no polynomial-time or otherwise efficient method for locating the infinitely many jump indices, is asserted.

\section{Definitions and Well-Definedness}

The alphabet $\A$ fixed above remains in force, with
\begin{equation}
|\A|=r\geq 2.
\end{equation}
All code sets considered below are finite prefix-free subsets of $\A^+$. If $p$ is a word, then $p^j$ denotes the $j$-fold concatenation of $p$, with $p^0$ the empty word.

\begin{definition}[T-augmentation]
Let $C\subset \A^+$ be a finite prefix code, let $p\in C$, and let $k\geq 1$. The T-augmentation of $C$ at the copy pattern $p$ with copy factor $k$ is
\begin{equation}
C_{(p)}^{(k)}=\bigl\{p^j s:s\in C\setminus\{p\},\ 0\leq j\leq k\bigr\}\cup\bigl\{p^{k+1}\bigr\}.
\end{equation}
This is the standard T-augmentation operation used in the T-code literature \cite{TitchenerEtAl2005,NicolescuTitchener1998,Guenther1998}.
\end{definition}

\begin{lemma}[Prefix-freeness of T-augmentation]
\label{lem:prefixfree}
If $C$ is prefix-free, then $C_{(p)}^{(k)}$ is prefix-free.
\end{lemma}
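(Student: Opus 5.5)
We prove prefix-freeness directly, by taking two distinct words of $C_{(p)}^{(k)}$ and assuming one is a proper prefix of the other. Each word of $C_{(p)}^{(k)}$ has the form $p^j s$ with $s\in C\setminus\{p\}$ and $0\le j\le k$, or is the word $p^{k+1}$. The key tool is a cancellation fact: if $s\in C\setminus\{p\}$, then neither of $p$ and $s$ is a prefix of the other, because $C$ is prefix-free and $s\neq p$. Consequently, for any words $x,y$, the words $ps$ and $py$ compare in prefix order exactly as $s$ and $y$ do. Also, $s$ and $px$ are always prefix-incomparable, since any prefix relation between them would force one of $s$, $p$ to be a prefix of the other.

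The plan is a case analysis on the exponents of $p$. Suppose $u=p^{j}s$ is a prefix of $v=p^{j'}s'$ with $s,s'\in C\setminus\{p\}$. First strip the common leading copies of $p$. If $j=j'$, we are left with $s$ a prefix of $s'$, so $s=s'$ by prefix-freeness of $C$, and hence $u=v$. If $j\neq j'$, then after stripping $\min(j,j')$ copies of $p$ we obtain a word $s$ (respectively $s'$) that is prefix-comparable with a word beginning with $p$. By the observation above this forces $s$ and $p$ to be prefix-comparable, contradicting prefix-freeness of $C$. The cases involving $p^{k+1}$ are handled the same way:
\begin{itemize}
\item If $p^{j}s$ is a prefix of $p^{k+1}$, or the reverse, then since $j\le k<k+1$ we can strip $j$ copies of $p$. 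This leaves $s$ comparable with $p^{k+1-j}$, which begins with $p$, so $s$ and $p$ are comparable, a contradiction.
\item The word $p^{k+1}$ compared with itself is trivially fine, since a word is never a proper prefix of itself.
\end{itemize}

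The only point needing care is stating the stripping step cleanly. If $pa$ is a prefix of $pb$, then $a$ is a prefix of $b$, by left cancellation in the free monoid. We also need that a word comparable with one beginning with $p$ has $p$ as a prefix or is itself a prefix of $p$. Both facts are immediate, so there is no real obstacle here; the proof reduces to this bookkeeping. We also note that $p\in\A^+$ is nonempty, so the words $p^{j}s$ for different $j$ genuinely differ in their leading structure. We would also remark, at no extra cost, that the listed words are pairwise distinct, so that the union really is a code of the expected size.
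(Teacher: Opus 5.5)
Your proof is correct and follows essentially the same route as the paper. Like the paper, you split into the cases $p^j s$ versus $p^h t$ and $p^j s$ versus $p^{k+1}$, cancel the common leading power of $p$, and reduce every prefix relation either to $s$ versus $s'$ at equal exponents or to a comparability between some $s\in C\setminus\{p\}$ and $p$, both of which prefix-freeness of $C$ forbids. One minor slip: your ``cancellation fact'' about $ps$ and $py$ is just left cancellation in the free monoid, which holds for arbitrary words and does not follow from the incomparability of $p$ and $s$, and the variable $x$ in it is unused.
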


\begin{proof}
Let $x$ and $y$ be two distinct words in $C_{(p)}^{(k)}$, and suppose for contradiction that one is a prefix of the other. If both words have the form $p^j s$ and $p^h t$ with $s,t\in C\setminus\{p\}$, then the case $j=h$ would make $s$ and $t$ comparable, contradicting prefix-freeness of $C$ unless $x=y$. If, say, $j<h$, then after deleting the common prefix $p^j$, the words $s$ and $p^{h-j}t$ would be comparable. Since $h-j\geq 1$, the word $p^{h-j}t$ begins with $p$. Hence $s$ and $p$ are comparable: either $s$ is a prefix of $p$ or $p$ is a prefix of $s$. This contradicts prefix-freeness of $C$, because $s\neq p$. The same argument applies symmetrically when $h<j$.

It remains only to compare a word $p^j s$ with the terminal word $p^{k+1}$. If $p^j s$ were a prefix of $p^{k+1}$, then, after deleting $p^j$, the word $s$ would be comparable with $p$, again impossible. If $p^{k+1}$ were a prefix of $p^j s$, then $j\leq k$ and deletion of $p^j$ would imply that $p$ is a prefix of $s$, also impossible. Hence no two distinct words in $C_{(p)}^{(k)}$ are comparable.
\end{proof}

\begin{definition}[T-prescription]
Starting from
\begin{equation}
C_0=\A,
\end{equation}
a finite sequence
\begin{equation}
P=((p_1,k_1),\ldots,(p_m,k_m))
\end{equation}
is a valid T-prescription if $p_i\in C_{i-1}$ for each $i$, where $C_i$ is obtained from $C_{i-1}$ by T-augmentation at $p_i$ with copy factor $k_i$. The integer $m$ is the arity of $P$. The empty sequence is allowed; it has arity $0$, T-complexity $0$, and maximum codeword length $1$. The prescription is simple if
\begin{equation}
k_1=\cdots=k_m=1.
\end{equation}
Its T-complexity is
\begin{equation}
\CT(P)=\sum_{i=1}^{m}\log_2(k_i+1),
\end{equation}
and $\ellmax(P)$ denotes the length of the maximal-length codewords in the final code.
\end{definition}

These are the standard prefix-free T-code conventions used in the cited literature, with copy patterns written directly as words over the alphabet \cite{TitchenerEtAl2005,NicolescuTitchener1998}.

The finite optimization problems introduced above are genuine maxima. Indeed, if $\ellmax(P)\leq N$, then the length formula proved in Lemma~\ref{lem:length} implies $m\leq N-1$ and $1\leq k_i\leq N-1$ for every T-augmentation step. At each T-augmentation step the current code is finite, so only finitely many copy patterns are available. Hence, for fixed $N$, only finitely many valid prescriptions can satisfy $\ellmax(P)\leq N$. The exact-length maxima are therefore attained whenever the corresponding class is nonempty, and $\CTsimpenv(N)$ is attained for every $N\geq 1$ because the empty simple prescription is admissible.

\begin{example}[The first binary instance]
\label{ex:binary}
Let $\A=\{0,1\}$. Consider the simple prescription
\begin{equation}
P_2=((0,1),(1,1)).
\end{equation}
After the first step the code is $\{00,01,1\}$, so the second copy pattern $1$ is available, and, as proved below in Lemma~\ref{lem:length}, $\ellmax(P_2)=3$ and $\CT(P_2)=2$. Replacing only the last factor by $2$ gives
\begin{equation}
\widehat P_2=((0,1),(1,2)),
\end{equation}
with
\begin{equation}
\ellmax(\widehat P_2)=4,
\qquad
\CT(\widehat P_2)=1+\log_2 3=\log_2 6>2.
\end{equation}
No simple prescription with three steps has maximum codeword length at most $4$: by Lemma~\ref{lem:norecycling}, as proved below, it would have to select three distinct nonempty binary words, whose total length is at least $1+1+2=4$, and hence Lemma~\ref{lem:length}, also proved below, would give maximum codeword length at least $5$. Thus $N=4$ is the first occurrence of the same mechanism that will be used infinitely often; this is the initial finite phenomenon displayed in Titchener et al.~\cite[Sec.~11]{TitchenerEtAl2005}.
\end{example}

\section{Proof of the Theorem}
\label{sec:proof}

We first record two elementary structural facts. They are standard consequences of the recursive T-code construction, but short proofs are included to make the argument self-contained.

\begin{lemma}[Maximum codeword length]
\label{lem:length}
Let
\begin{equation}
P=((p_1,k_1),\ldots,(p_m,k_m))
\end{equation}
be a valid T-prescription. Then
\begin{equation}
\ellmax(P)=1+\sum_{i=1}^{m}k_i|p_i|.
\end{equation}
In particular, if $P$ is simple, then
\begin{equation}
\ellmax(P)=1+\sum_{i=1}^{m}|p_i|
\end{equation}
and
\begin{equation}
\CT(P)=m.
\end{equation}
\end{lemma}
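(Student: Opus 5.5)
The plan is to prove the length formula by induction on the arity $m$, tracking the maximal codeword length of $C_i$ through each T-augmentation. For $m=0$ the code is $C_0=\A$, every codeword has length $1$, and the formula gives $1$. For the inductive step, write $L_{i}$ for the maximal codeword length in $C_{i}$ and aim for the recursion
\begin{equation}
L_i=L_{i-1}+k_i|p_i|,
\end{equation}
which telescopes to $\ellmax(P)=L_m=1+\sum_{i=1}^m k_i|p_i|$. The simple case then follows by setting $k_i=1$, and $\CT(P)=\sum_{i=1}^m\log_2 2=m$ is immediate from the definition.

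To establish the recursion, I will read off the lengths in $C_i=(C_{i-1})^{(k_i)}_{(p_i)}$ directly from the definition of T-augmentation. Its words are $p_i^j s$ with $s\in C_{i-1}\setminus\{p_i\}$ and $0\le j\le k_i$, of length $j|p_i|+|s|$, together with $p_i^{k_i+1}$, of length $(k_i+1)|p_i|$. Every such length is at most $k_i|p_i|+L_{i-1}$: for the first family because $|s|\le L_{i-1}$, and for the terminal word because $|p_i|\le L_{i-1}$. For the reverse inequality I need a word attaining $k_i|p_i|+L_{i-1}$. If some $s\neq p_i$ in $C_{i-1}$ has $|s|=L_{i-1}$, then $p_i^{k_i}s$ attains it. Otherwise $p_i$ is the unique longest word, $|p_i|=L_{i-1}$, and $p_i^{k_i+1}$ has length $(k_i+1)|p_i|=k_i|p_i|+L_{i-1}$. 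Either way the maximum equals $L_{i-1}+k_i|p_i|$.

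The main point needing care is this case split. The case where the copy pattern is itself the unique longest codeword must be handled separately, since then no $s\neq p_i$ realizes $L_{i-1}$. It is covered exactly by the terminal word $p_i^{k_i+1}$. Two routine facts are also used: $C_{i-1}\setminus\{p_i\}$ is nonempty, which holds because $|C_{i-1}|\ge r\ge2$ and augmentation never decreases the size of the code, although the argument above does not actually need it; and validity of $P$ guarantees $p_i\in C_{i-1}$, so that $|p_i|\le L_{i-1}$.
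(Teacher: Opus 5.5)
Your proposal is correct and follows essentially the same route as the paper. It uses the same per-step upper bound $L+k|p|$ (via $|s|\le L$ and $|p|\le L$), the same witnesses $p^k s$ or $p^{k+1}$ for attainment, and telescopes the increments. Your case split (another longest word exists versus $p_i$ is the unique longest word) is only a rephrasing of the paper's split ($p$ not of maximal length versus $p$ of maximal length).
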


\begin{proof}
The initial code is the alphabet, hence its maximal-length codewords have length one. Consider one augmentation at a current copy pattern $p$ with copy factor $k$, and suppose that the old maximum codeword length is $L$. Every new codeword has the form $p^j s$, with $s\neq p$ and $0\leq j\leq k$, or the form $p^{k+1}$. The first type has length at most $k|p|+L$, and the second type has length $(k+1)|p|\leq k|p|+L$, because $|p|\leq L$. Therefore no new codeword has length greater than
\begin{equation}
L+k|p|.
\end{equation}
This length is attained. If $p$ is not a maximal-length old codeword, take a maximal-length old codeword $s\neq p$ and obtain $p^k s$. If $p$ is itself a maximal-length old codeword, then $p^{k+1}$ has length $L+k|p|$. Thus every step increases the maximum codeword length by exactly $k|p|$. Summing the increments proves the formula. The simple case follows from $k_i=1$.
\end{proof}

\begin{lemma}[No recycling of copy patterns]
\label{lem:norecycling}
In a valid T-prescription, a copy pattern can never be selected twice.
\end{lemma}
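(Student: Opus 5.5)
The plan is to prove that once a word $p$ has served as copy pattern, the word $p$ itself never again belongs to any later code. Selection at step $i$ requires $p_i\in C_{i-1}$, so this immediately rules out $p_j=p_i$ for $j>i$. The argument splits into an easy step (the word disappears right after it is used) and an invariant showing it never reappears.

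For the first step, consider $C_i=C_{(p)}^{(k)}$ built from $C_{i-1}\ni p$. I will check directly that $p\notin C_i$. Every element has the form $p^js$ with $s\in C_{i-1}\setminus\{p\}$, or is $p^{k+1}$. The word $p^{k+1}$ is strictly longer than $p$. For $j\ge 1$, the word $p^js$ is also strictly longer than $p$. For $j=0$, the word is $s\neq p$. So $p\notin C_i$. More usefully, I will record that $p$ is a proper prefix of some codeword of $C_i$, for instance of $p^{k+1}$.

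For the second step, I will propagate the property ``$p$ is a proper prefix of some codeword of the current code.'' I claim that if a word $w$ is a proper prefix of some $u\in C$, then after any T-augmentation $w$ is still a proper prefix of some codeword of $C_{(q)}^{(k')}$. There are two cases.
\begin{itemize}
\item If $u\neq q$, then $u=q^0u$ survives in the new code, so the property holds with the same $u$.
\item If $u=q$, then $q^{k'+1}$ belongs to the new code and has $q$, hence $w$, as a proper prefix.
\end{itemize}
By induction, for all $j>i$ the word $p$ is a proper prefix of some element of $C_{j-1}$. Since $C_{j-1}$ is prefix-free by Lemma~\ref{lem:prefixfree}, $p$ cannot itself lie in $C_{j-1}$. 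Otherwise two distinct codewords would be prefix-comparable. Hence $p_j\neq p$.

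I expect no real obstacle. The only point needing care is the invariant in the second step: stating it as ``proper prefix of a codeword'' rather than simply ``not a codeword.'' The weaker statement is not obviously preserved, since a later augmentation could in principle create $p$ as some $q^js$. The prefix invariant together with prefix-freeness rules this out cleanly.
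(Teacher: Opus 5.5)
Your proposal is correct and follows essentially the same route as the paper: both show that $p$ becomes a proper prefix of the terminal word $p^{k+1}$. Both then propagate the invariant ``$p$ is a proper prefix of some current codeword'' through the same two cases (the witness survives with prefix power $0$, or is replaced by its own terminal power), and conclude via prefix-freeness (Lemma~\ref{lem:prefixfree}) that $p$ is never again available.
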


\begin{proof}
Suppose that $p$ is selected at some step with copy factor $k$. After this augmentation, $p$ is removed from the code and becomes a proper prefix of the new terminal codeword $p^{k+1}$.
We claim that this property persists: after every later augmentation, $p$ is still a proper prefix of at least one current codeword. Indeed, assume that $p$ is a proper prefix of some current codeword $w$. If the next copy pattern is not $w$, then $w$ remains in the code, because every non-selected old codeword occurs with prefix power $0$. If the next copy pattern is $w$, then $w$ is removed, but the terminal word $w^{h+1}$ is inserted for the corresponding copy factor $h$, and $p$ is still a proper prefix of $w^{h+1}$. This proves the invariant by induction.
Since all intermediate code sets are prefix-free by Lemma~\ref{lem:prefixfree}, a word that is a proper prefix of a current codeword cannot itself be a current codeword. Thus $p$ is unavailable for every later step and cannot be selected twice.
\end{proof}

For $m\geq 0$, let $L_m$ be the smallest possible maximum codeword length after exactly $m$ simple T-augmentations:
\begin{equation}
L_m=\min\{\ellmax(P): P\text{ is simple and has arity }m\}.
\end{equation}
Here the minimum is taken over all valid simple T-prescriptions $P$ with exactly $m$ augmentation steps, i.e. over all dynamically admissible choices of $m$ copy patterns with all copy factors equal to one.
The class is nonempty for every $m$: after every finite sequence of T-augmentations the current code is finite and nonempty, so at least one leaf/codeword is available for another simple step. For fixed $m$ there are only finitely many simple prescriptions of arity $m$; this follows by induction, because from each finite intermediate code only finitely many next copy patterns can be chosen. Hence the minimum exists. Thus
\begin{equation}
L_0=1.
\end{equation}
Put
\begin{equation}
d_m=L_m-L_{m-1}\qquad (m\geq 1).
\end{equation}
Indeed, let $P$ be a minimizing simple prescription of arity $m$, and let $P^-$ be its prefix of arity $m-1$. By the definition of $L_{m-1}$, one has $\ellmax(P^-)\geq L_{m-1}$. By Lemma~\ref{lem:length}, the last simple step increases $\ellmax$ by the positive length of its copy pattern, hence by at least $1$. Therefore $L_m=\ellmax(P)\geq L_{m-1}+1$, and so
\begin{equation}
d_m\geq 1\qquad (m\geq 1).
\end{equation}
The number $d_m$ is the increase in the minimal achievable maximum codeword length when the arity is raised from $m-1$ to $m$ within the class of simple prescriptions.

\begin{lemma}[Threshold increments have infinitely many upward jumps]
\label{lem:jumps}
There are infinitely many indices $m$ such that
\begin{equation}
d_{m+1}>d_m.
\end{equation}
\end{lemma}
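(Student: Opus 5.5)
Suppose, for contradiction, that only finitely many indices $m$ satisfy $d_{m+1}>d_m$. Then there is an index $m_0$ with $d_{m+1}\leq d_m$ for all $m\geq m_0$. So the sequence $(d_m)_{m\geq m_0}$ is non-increasing. By the observation preceding the lemma it is also bounded below by $1$, and it takes integer values. Hence it is eventually constant: there are $D\geq 1$ and $m_1$ with $d_m=D$ for all $m\geq m_1$. Consequently $L_m = L_{m_1}+D(m-m_1)$ grows at most linearly in $m$, say $L_m\leq Dm+c$ for a constant $c$ and all $m$.

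The contradiction will come from a superlinear lower bound on $L_m$, obtained from Lemma~\ref{lem:norecycling}. Let $P$ be any simple prescription of arity $m$. By that lemma its copy patterns $p_1,\ldots,p_m$ are pairwise distinct nonempty words over $\A$. By Lemma~\ref{lem:length}, $\ellmax(P)=1+\sum_i|p_i|$. So $L_m\geq 1+S_m$, where $S_m$ is the minimum total length of $m$ distinct nonempty words over an $r$-letter alphabet. At most $r^j$ words have length $j$. Taking $m$ distinct words, at most $r+r^2+\cdots+r^{J}$ of them have length $\leq J$. Choose $J$ with $r+\cdots+r^J<m/2$; this allows $J\approx\log_r m-O(1)$. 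Then at least $m/2$ of the words have length $>J$. This gives $S_m\geq (m/2)(J+1)\geq c'\,m\log_r m$ for large $m$. In particular $S_m/m\to\infty$.

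Combining the two bounds gives $1+c'm\log_r m\leq L_m\leq Dm+c$ for all large $m$, which is impossible. Hence $d_{m+1}>d_m$ for infinitely many $m$.

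The only step needing care is the counting bound on $S_m$. It is routine once Lemma~\ref{lem:norecycling} supplies distinctness. Dynamic admissibility of the patterns is irrelevant here, since only a lower bound on $L_m$ is needed. The other point to state carefully is that a non-increasing sequence of positive integers is eventually constant, which gives the linear upper bound on $L_m$.
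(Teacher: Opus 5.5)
Your proof is correct and follows essentially the same route as the paper: no-recycling plus the length formula give $L_m\geq 1+D_{r,m}$, a counting argument on short words shows $D_{r,m}/m\to\infty$, and this contradicts the at-most-linear growth of $L_m$ forced by eventually non-increasing, hence bounded, increments. The differences are cosmetic. You cut at $m/2$ short words to get an explicit $m\log_r m$ bound, whereas the paper takes $H=\lfloor\frac{1}{2}\log_r m\rfloor$ so that only $O(\sqrt m)$ words are short; and your step to an eventually constant sequence is stronger than needed, since boundedness of $(d_m)$ already suffices.
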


\begin{proof}
The proof rests on the following elementary counting principle. By Lemma~\ref{lem:norecycling}, a simple prescription with $m$ steps must select $m$ distinct copy patterns. By Lemma~\ref{lem:length}, these selected word lengths are exactly the $m$ positive increments whose sum, plus the initial length $1$, gives the final maximum codeword length. Thus, if a simple prescription has many steps, it must use many distinct nonempty words; but only finitely many nonempty words are short.

Let $D_{r,m}$ denote the smallest possible total length of $m$ distinct nonempty words over an alphabet of size $r$. For any simple prescription $P$ of arity $m$, Lemma~\ref{lem:norecycling} gives distinct copy patterns $p_1,\ldots,p_m$, and Lemma~\ref{lem:length} gives
\begin{equation}
\ellmax(P)=1+\sum_{i=1}^{m}|p_i|\geq 1+D_{r,m}.
\end{equation}
Taking the minimum over all simple prescriptions of arity $m$ yields
\begin{equation}
L_m\geq 1+D_{r,m}.
\end{equation}
This is only a lower bound: $D_{r,m}$ ignores the additional dynamical constraint that the selected words must be available at the required times during one valid T-augmentation process. The lower bound is sufficient.

It remains to show that $D_{r,m}/m\to\infty$. Fix an integer $H\geq 1$ and consider the levels of nonempty words by length:
\begin{equation}
\A^1,\quad \A^2,\quad \ldots,\quad \A^H,\quad \A^{H+1},\ldots .
\end{equation}
The number of nonempty words of length at most $H$ is
\begin{equation}
B_{r,H}=\sum_{j=1}^{H}r^j=\frac{r^{H+1}-r}{r-1}.
\end{equation}
Therefore, among any $m$ distinct nonempty words, at most $B_{r,H}$ can have length at most $H$. At least $\max\{0,m-B_{r,H}\}$ of them must have length at least $H+1$. Consequently, for every $H\geq 1$,
\begin{equation}
D_{r,m}\geq \max\{0,m-B_{r,H}\}(H+1).
\end{equation}
Now choose
\begin{equation}
H=H_m:=\left\lfloor \frac{1}{2}\log_r m\right\rfloor .
\end{equation}
For all sufficiently large $m$ this integer is at least $1$, and
\begin{equation}
B_{r,H_m}=\frac{r^{H_m+1}-r}{r-1}\leq \frac{r\sqrt m-r}{r-1}=O(\sqrt m).
\end{equation}
Hence $B_{r,H_m}/m\to 0$, while $H_m+1\to\infty$. For all sufficiently large $m$, the maximum in the preceding lower bound is therefore attained by $m-B_{r,H_m}$, and division by $m$ gives
\begin{equation}
\frac{D_{r,m}}{m}
\geq \left(1-\frac{B_{r,H_m}}{m}\right)(H_m+1)
\longrightarrow \infty.
\end{equation}
Thus $D_{r,m}/m\to\infty$, and consequently $L_m/m$ is unbounded.

If the increments $d_m=L_m-L_{m-1}$ were bounded, say $d_m\leq M$ for all $m$, then
\begin{equation}
L_m=L_0+\sum_{j=1}^{m}d_j\leq 1+Mm,
\end{equation}
contradicting the unboundedness of $L_m/m$. Hence the sequence $(d_m)_{m\geq 1}$ is unbounded. Finally suppose that only finitely many strict upward jumps $d_{m+1}>d_m$ occurred. After the last such jump the sequence would be non-increasing, and the finite initial segment is bounded as well; hence the whole sequence $(d_m)$ would be bounded. This contradiction proves that infinitely many indices satisfy $d_{m+1}>d_m$.
\end{proof}

\begin{proof}[Proof of Theorem~\ref{thm:main}]
Choose $m$ such that
\begin{equation}
d_{m+1}>d_m.
\end{equation}
Let $P_m$ be a simple prescription with arity $m$ and minimal maximum codeword length:
\begin{equation}
\ellmax(P_m)=L_m.
\end{equation}
Write
\begin{equation}
P_m=((p_1,1),\ldots,(p_m,1)).
\end{equation}
Let $P_m^-$ be the prefix prescription consisting of the first $m-1$ steps. Since $P_m^-$ is simple of arity $m-1$, the definition of $L_{m-1}$ gives
\begin{equation}
\ellmax(P_m^-)\geq L_{m-1}.
\end{equation}
The last simple step from $P_m^-$ to $P_m$ increases the maximum codeword length by $|p_m|$. Therefore
\begin{equation}
|p_m|=\ellmax(P_m)-\ellmax(P_m^-)
       \leq L_m-L_{m-1}=d_m.
\end{equation}
Now keep the first $m-1$ steps and replace only the last simple step $(p_m,1)$ by the non-simple step $(p_m,2)$. This gives the valid prescription
\begin{equation}
\widehat P_m=((p_1,1),\ldots,(p_{m-1},1),(p_m,2)).
\end{equation}
It is valid because $p_m$ is available after the first $m-1$ steps. By Lemma~\ref{lem:length}, the first $m-1$ increments are unchanged, while the last increment is $2|p_m|$ instead of $|p_m|$. Hence its maximum codeword length is
\begin{equation}
N_m=\ellmax(\widehat P_m)=\ellmax(P_m)+|p_m|=L_m+|p_m|.
\end{equation}
Using $|p_m|\leq d_m$ and $d_{m+1}>d_m$, we get
\begin{equation}
N_m\leq L_m+d_m<L_m+d_{m+1}=L_{m+1}.
\end{equation}
Thus no simple prescription with arity $m+1$ or larger can have maximum codeword length at most $N_m$. To see this explicitly, suppose that a simple prescription $Q$ has arity $q\geq m+1$ and $\ellmax(Q)\leq N_m$. Let $Q^{[m+1]}$ be its prefix of length $m+1$. By Lemma~\ref{lem:length}, every later simple step adds a positive word length, so
\begin{equation}
\ellmax(Q^{[m+1]})\leq \ellmax(Q)\leq N_m<L_{m+1}.
\end{equation}
This contradicts the definition of $L_{m+1}$ as the minimal maximum codeword length among all simple prescriptions of arity $m+1$.

Hence every simple prescription counted by $\CTsimpenv(N_m)$ has at most $m$ steps, and so
\begin{equation}
\CTsimpenv(N_m)\leq m.
\end{equation}
On the other hand,
\begin{equation}
\CT(\widehat P_m)=(m-1)+\log_2 3>m.
\end{equation}
Since $\ellmax(\widehat P_m)=N_m$, this gives
\begin{equation}
\CT(N_m)\geq \CT(\widehat P_m)>m\geq \CTsimpenv(N_m).
\end{equation}
There are infinitely many indices $m$ with $d_{m+1}>d_m$ by Lemma~\ref{lem:jumps}. The corresponding lengths $N_m$ are unbounded: since $d_j\geq 1$ for every $j\geq 1$,
\begin{equation}
L_m=L_0+\sum_{j=1}^{m}d_j\geq 1+m\longrightarrow\infty,
\end{equation}
and $N_m\geq L_m$. Hence infinitely many distinct lengths satisfy the claimed strict inequality. Since
\begin{equation}
\CTsimp(N)\leq \CTsimpenv(N),
\end{equation}
the exact-length inequality follows for the same lengths.
\end{proof}

\section{Conclusion}

The finite-length question raised by Titchener et\,al.~\cite[Sec.~11]{TitchenerEtAl2005}
has an affirmative answer. For every alphabet with at least two symbols, there are
infinitely many maximum codeword lengths at which the unrestricted exact-length
maximum T-complexity exceeds the simple upper envelope, and hence also the
corresponding simple exact-length maximum.

The mechanism is elementary. Once a copy pattern has been used, it cannot be
selected again; therefore simple prescriptions with many steps must use many
distinct copy patterns, while only finitely many short words are available. This
forces the minimal simple thresholds to have increments with infinitely many
strict upward jumps. At each such jump, changing only the last copy factor of a
minimal simple prescription from $1$ to $2$ stays below the next simple threshold
and increases T-complexity by $\log_2 3-1$. The constructed prescription is not
asserted to be an unrestricted maximizer; the strict inequality for the
unrestricted maximum follows because this maximum dominates every valid
prescription of the same exact length.

The result is deliberately finite-length in nature. It does not describe the density
or distribution of the exceptional lengths, determine the largest possible
non-simple gain, or provide a closed-form sequence of resulting lengths or an
efficient procedure for locating the jump indices. It complements the asymptotic
and algorithmic work on T-complexity by showing that the advantage of non-simple
copy factors is not merely an isolated short-length phenomenon.

\section*{Acknowledgements}

The author is grateful to Mark R. Titchener for his helpful comments and suggestions, which improved the presentation and helped clarify the connection with the T-code framework.

\end{document}